\spnewtheorem*{definition*}{Definition}{\bfseries\upshape}{\itshape}
\newcommand{\N}{\mathbb{N}}
\newcommand{\gA}{\mathfrak{A}}
\newcommand{\Sym}{\mathbf{S}}
\newcommand{\Tran}{\mathbf{T}}
\newcommand{\Aut}{\mathscr{A}}
\newcommand{\cB}{\mathcal{B}}
\newcommand{\id}{e}
\newcommand{\abs}[1]{{\lvert#1\rvert}}
\newcommand{\floor}[1]{{\left\lfloor{#1}\right\rfloor}}
\newcommand{\ceil}[1]{{\left\lceil{#1}\right\rceil}}
\newcommand{\len}[1]{{\ell(#1)}}
\newcommand{\tranf}{\delta}
\newcommand{\outf}{\lambda}
\newcommand{\eword}{\varepsilon}
\newcommand{\bydef}{=}
\begin{document}

\mainmatter  

\title{Preset Distinguishing Sequences\\ and\\ Diameter of Transformation Semigroups}

\titlerunning{Preset Distinguishing Sequences and Diameter of Transformation Semigroups}
\toctitle{Preset Distinguishing Sequences and Diameter of Transformation Semigroups}

\author{Pavel Panteleev}
\authorrunning{P.~Panteleev} 
\tocauthor{P.~Panteleev}

\institute{Faculty of Mechanics and Mathematics,\\ 
Lomonosov Moscow State University,\\
GSP-1, Leninskiye Gory, Moscow, 119991, Russian Federation
}
\maketitle

\begin{abstract}
 We investigate the length $\ell(n,k)$ of a shortest preset distinguishing sequence (PDS)  in the worst case for a~$k$-element subset of an~$n$-state Mealy automaton. It was mentioned by Sokolovskii \cite{sokolovskii:1976:en} that this problem is closely related to the problem of finding the maximal subsemigroup diameter $\ell(\mathbf{T}_n)$ for the full transformation semigroup $\mathbf{T}_n$ of an $n$-element set. We prove that $\ell(\mathbf{T}_n)=2^n\exp\{\sqrt{\frac{n}{2}\ln n}(1+ o(1))\}$ as $n\to\infty$ and, using approach of Sokolovskii, find the asymptotics of $\log_2 \ell(n,k)$ as $n,k\to\infty$ and $k/n\to a\in (0,1)$.

\keywords{automata, finite-state machine, preset distinguishing sequence, transformation semigroup, diameter}
\end{abstract}

\section{Introduction}

Finite state machines are widely used models for systems in a variety of
areas, including sequential circuits~\cite{Kohavi:1970} and communication protocols~\cite{lee:1996}. The study of finite automata testing is motivated by applications in the verification of these systems. One of the basic tasks in the verification of finite automata is to identify the state of the automaton under investigation. Once the state is known, the behavior of the automaton becomes predictable and it is possible to force the automaton into the desirable mode of operation. Suppose we have a finite deterministic Mealy automaton $\gA$ whose transition and output functions are available and we know that its initial state $q_0$ is in some subset $S$ of its set of states $Q$. The \emph{state-identification} problem is to find an input sequence called a \emph{preset distinguishing sequence} (PDS) for $S$ in $\gA$ that produces different outputs for different states from $S$. Before we give a formal definition of a PDS and state the results of the paper we need to fix notations and recall some standard definitions from automata theory. 

A \emph{finite deterministic Mealy automaton} (an \emph{automaton} for short) is a quintuple
$\gA=(A,Q,B,\tranf,\outf)$, where: $A$, $Q$, $B$ are finite
nonempty sets called the \emph{input alphabet}, the \emph{set of states}, and the \emph{output alphabet}, respectively; $\tranf\colon Q\times A\rightarrow Q$ and $\outf\colon Q\times A\rightarrow B$
are total functions called the \emph{transition function} and the \emph{output
function}, respectively.

If we omit in the definition of automaton the output alphabet~$B$ and the output function~$\outf$ we obtain an object ${\gA=(A,Q,\tranf)}$ called \emph{finite semiautomaton}. If the output function $\tranf$ is partial then the semiautomaton is also called \emph{partial}.

Let $\Sigma$ be an arbitrary alphabet. By $\Sigma^*$ we denote the set of
all words over the alphabet $\Sigma$. Denote by $|\alpha|$ the length
of a word $\alpha\in \Sigma^*$. Denote by $\eword$ the \emph{empty} word, i.e., $\abs{\eword}=0$.

As usual, we extend functions $\tranf$ and $\outf$ to the set $Q\times A^*$ in the
following way:
$\tranf(q,\eword)\bydef q$, $\tranf(q,\alpha
a) \bydef \tranf(\tranf(q,\alpha),a)$,
$\outf(q,\eword) \bydef \eword$,
$\outf(q,\alpha a) \bydef \outf(q,\alpha)\outf(\tranf(q,\alpha),a)$,
where $q\in Q, a\in A,
\alpha\in A^{*}$.
Moreover, if $S\subseteq Q$ is a subset of states, then we let $\tranf(S, \alpha) \bydef \{\delta(q, \alpha) \mid q\in S\}$.

We say that two states $q_1,q_2\in Q$ of an automaton
$\gA$ are \emph{distinguishable} by an input word $\alpha\in
A^{*}$ if $\outf(q_1,\alpha)\neq \outf(q_2,\alpha)$. If there are no such words we say that the states $q_1,q_2$ are \emph{indistinguishable} or \emph{equivalent}. An automaton is called \emph{reduced} or \emph{minimal} if it does not have equivalent states.
\begin{definition*}
Let $S$ be a subset of states of an automaton $\gA$. We say that an input word $\alpha$ is a preset distinguishing sequence (PDS) for  $S$ in $\gA$ if $\alpha$ pairwise distinguishes the states in the set $S$, i.e., $\outf(q_1,\alpha)\ne\outf(q_2,\alpha)$ for all $q_1,q_2\in S$, $q_1\ne q_2$.
\end{definition*}

Denote by $\ell(\gA,S)$ the length of a shortest PDS for $S$ in $\gA$, or $0$ if such a PDS does not exist. It is a well known fact~\cite{moore:1956:en} that there are reduced automata that do not have a~PDS for some $k$-element subsets of states when $k \ge 3$. Moreover it can be easily verified that the reduced automaton on Fig. \ref{fg:pds_example} does not have a~PDS for \emph{any} $3$ element subset of states. 

\begin{figure}\center
\begin{tikzpicture}[std-opt,auto=right,on grid,bend angle=30,every node/.style={font=\footnotesize}]
    \tikzstyle{my-state}=[state,minimum size=6mm,inner sep=1pt]
    \radius=1.7cm
    \foreach \angle/\state in {90/q_1,135/q_2, 270/q_i, 45/q_n}
        \node[my-state] (\state) at (\angle:\radius) {$\state$};
    \node[circle,rotate=90+202.5] (dots1) at (202.5:\radius) {$\cdots$};
    \node[circle,rotate=90+337.5] (dots2) at (337.5:\radius) {$\cdots$};
    \foreach \from/\to in {q_1/q_2, q_2/dots1, dots1/q_i, q_i/dots2, dots2/q_n, q_n/q_1}
        \path (\from) edge [->,bend right=22,looseness=0.8] node {$0/0$} (\to);
    \path (q_2) edge[->,bend right=22,looseness=0.8] node[pos=0.2]       {$1/0$} (q_1);
    \path (q_n) edge[->,bend left=22,looseness=0.8]  node[pos=0.2,swap]  {$1/0$} (q_1);
    \path (q_i) edge[->]                             node                {$1/0$} (q_1);
    \path (q_1) edge[->,wide loop above]             node                {$1/1$} (q_1);
\end{tikzpicture}
\caption{The reduced automaton $\gA$ that does not have a PDS for any $3$-element state subset }\label{fg:pds_example}
\end{figure}
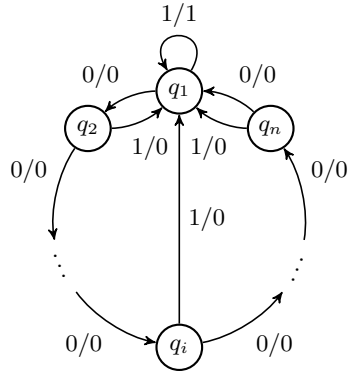 

 Consider the function
\[\ell(n,k) \bydef \max_{\gA\in\Aut_n, \abs{S}=k} \ell(\gA,S),\]
where $\Aut_n$ is the class of all $n$-state automata. This function can be interpreted as the length of a shortest~PDS in the worst case for a $k$-element subset of states in an $n$-state automaton.

The function $\ell(n,k)$ was studied by many authors. In his seminal paper \cite{moore:1956:en} Moore proves that $\ell(n,2)=n-1$. Gill~\cite{Gill:1961} gives the upper bound $\ell(n,k) \le (k-1)n^k$. Sokolovskii finds the lower bounds in \cite{sokolovskii:1971:en}:
\begin{align}
\ell(n,k) & \ge  \binom{n-1}{k-1}\text{ if }1\le k\le n/2,\label{th:sok:low1}\\
\ell(n,k) & \ge  \binom{n-2}{\floor{(n-2)/2}}\text{ if }n/2 < k < n\label{th:sok:low2}.
\end{align}

In~\cite{rystsov:1980:en} Rystsov shows that $\log_3 \ell(n,n)\sim n/6$ as $n\to\infty$. The result is proved reducing the problem of estimating $\ell(n,n)$ to the problem of estimating the function $T(n)$ that is equal to the length of a shortest irreducible word in the worst case for a partial $n$-state semiautomaton. An \emph{irreducible word} for a partial semiautomaton $\gA=(A,Q,\tranf)$ is a word $\alpha\in A^*$ such that its action is defined on all states and for any word $\beta\in A^*$ such that its action is defined on the set $\tranf(Q,\alpha)$ we have $\abs{\tranf(Q,\alpha)} = \abs{\tranf(Q,\alpha\beta)}$. In~\cite{rystsov:1980:en} it is proved that $\log_3 T(n)\sim n/3$ as $n\to\infty$. It is interesting to note that $T(n)$ coincides with the function $d_3(n)$  studied by several authors~\cite{Martyugin:2010,Gazdag:2009} which is equal to the length of a shortest carefully synchronizing word\footnote{A word $\alpha$ is a \emph{carefully synchronizing} for a partial semiautomaton $\gA=(A,Q,\tranf)$ if the action of $\alpha$ is defined on all states and $\abs{\tranf(Q,\alpha)} = 1$.} in the worst case for a partial $n$-state semiautomaton. This is due to the fact that every carefully synchronizing word is also irreducible and the worst case irreducible word is always carefully synchronizing\footnote{If $\alpha$ is a shortest irreducible word for a partial semiautomaton $\gA=(A,Q,\tranf)$ and $\abs{\tranf(Q,\alpha)}>1$ then we can always add a new input symbol $a$ to $\gA$ and obtain $\gA'=(A\cup\{a\},Q,\tranf')$ such that $\alpha a$ is a shortest irreducible word for $\gA'$ and $\abs{\tranf'(Q,\alpha a)}=1$.}. Thus we have $\log_3 d_3(n)\sim n/3$ which was conjectured in \cite{Gazdag:2009}.  

In the paper \cite{sokolovskii:1976:en} Sokolovskii investigate the relationship between the function $\ell(n,k)$ and the maximum of a subsemigroup diameter in the full transformation semigroup of an $n$-element set.

\begin{definition*} Let $\Omega_n$ be an $n$-element set. The full transformation semigroup of $\Omega_n$ (also called the symmetric semigroup of $\Omega_n$) is the set $\Tran_n$ of all transformations of $\Omega_n$.
\end{definition*}

The set $\Tran_n$ contains the proper subset $\Sym_n$ of all bijections on the set $\Omega_n$ called the \emph{symmetric group} on $\Omega_n$. We see that $\Tran_n$ is a monoid and $\Sym_n$ is a group with function composition as the multiplication operation. In this paper, by the \emph{composition} $fg$ of transformations ${f,g\in\Tran_n}$ we mean the \emph{left} composition ${x\mapsto g(f(x))}$.

Consider $\cB\subseteq \Tran_n$. By $\langle \cB\rangle$ denote the \emph{closure} of the set $\cB$, i.e., the set $\{f_1 \ldots f_\ell \mid f_1,\ldots,f_\ell\in \cB\}$. Let $f\in \langle \cB \rangle$ and $\ell$ be the minimum natural number such that $f=f_1 \ldots f_\ell$ for some $f_1,\ldots,f_\ell\in\cB$. Then $\ell$ is called the \emph{complexity} of the function $f$ over the \emph{basis} $\cB$ and is denoted by $\ell_{\cB}(f)$. We should also mention that the same function was considered in the paper \cite{Salomaa:2003} under the name \emph{depth}.

For any subset $\mathcal{C}\subseteq \Tran_n$ we define the following function:
\begin{equation}\label{eq:set_compl}
\ell(\mathcal{C}) \bydef \max_{\cB\subseteq \mathcal{C},f\in\langle \cB \rangle}\ell_{\cB}(f).
\end{equation}

The function $\ell(\mathcal{C})$ can be interpreted as the worst-case complexity of the functions from $\mathcal{C}$. In the paper \cite{sokolovskii:1976:en} Sokolovskii shows that:
\begin{equation*}
    \begin{aligned}
        \binom{n-1}{\floor{\frac{n-1}{2}}} < \ell(\Tran_n) < n^{\frac{n}{2}(1+o(1)))},\\
        \mathrm{e}^{\sqrt{n\ln n}(1+o(1))} < \ell(\Sym_n) < n!^{\frac12(1+o(1))},
    \end{aligned}
\end{equation*}
as $n\to\infty$. It is worth mentioning that the lower bound for $\ell(\Sym_n)$ follows from the asymptotic estimate of the maximum order of the permutations from $\Sym_n$ called Landau's function \cite{landau}. The stronger result for $\ell(\Sym_n)$ follows from \cite{Babai:2006}.
The author considers only \emph{closed} sets $\mathcal{C}$ (i.e., $\langle\mathcal{C}\rangle=\mathcal{C}$), which  are subgroups of $\Sym_n$. For any subgroup $G$ of $\Sym_n$ the \emph{directed diameter} $\mathrm{diam}^{+}(G)$ of the group $G$ is defined as follows:
\begin{equation*}
    \mathrm{diam}^{+}(G) \bydef \max_{f\in G, \langle\cB\rangle = G} \ell_\cB(f).\\
\end{equation*}

It is easily shown that $\ell(\Sym_n)=\max_{G}\mathrm{diam}^{+}(G)$, where $G$ ranges over all subgroups of $\Sym_n$. From the results of \cite{Babai:2006} it follows that
\begin{equation}\label{eq:sym_compl}
    \ell(\Sym_n) = \mathrm{e}^{\sqrt{n\ln n}(1+o(1))}\text{ as } n\to\infty.\\
\end{equation}

We are now ready to state the first of the two main results of this paper.
\begin{theorem}\label{th:tran}
We have $\ell(\Tran_n) = 2^n\mathrm{e}^{\sqrt{\frac{n}{2}\ln n}(1+o(1))}$ as $n\to\infty$.
\end{theorem}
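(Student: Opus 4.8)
The plan is to prove matching upper and lower bounds, with the upper bound being the substantial new ingredient: Sokolovskii's lower bound $\binom{n-1}{\floor{(n-1)/2}}$ already supplies the factor $2^{n(1+o(1))}$, while his upper bound $n^{\frac n2(1+o(1))}$ is what must be sharpened. For the upper bound, fix a basis $\cB\subseteq\Tran_n$ and $f\in\langle\cB\rangle$ with a minimal factorization $f=f_1\cdots f_\ell$, and set $g_i=f_1\cdots f_i$. Minimality forces all prefixes $g_0,\dots,g_\ell$ to be distinct (a repetition $g_i=g_j$ would let us delete $f_{i+1}\cdots f_j$). Writing $T_i=\tranf(\Omega_n,\text{-})$-style images $T_i$ for $g_i(\Omega_n)$, the ranks $\abs{T_i}$ are non-increasing and drop at most $n-1$ times, so it suffices to bound, for each $r$, the number of consecutive steps during which the rank stays equal to $r$. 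On such a stretch every $f_{i+1}$ restricts to a bijection $T_i\to T_{i+1}$, so the kernel partition of $g_i$ is constant; identifying its $r$ blocks with $[r]$, each prefix becomes an injection $\beta_i\colon[r]\hookrightarrow\Omega_n$, and a step is right-multiplication by the restriction of some $b\in\cB$.

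The key step is to bound the length of such a rank-$r$ stretch by $\binom nr\,(\ell(\Sym_r)+1)$. The set of injections $[r]\hookrightarrow\Omega_n$ fibers over the $\binom nr$ possible images $T$ with fiber a torsor under $\Sym_r$, and the $\cB$-moves form a bundle over this base: some moves change the image $T$, while moves fixing $T$ setwise act within the fiber as a permutation of $T$. The length of the stretch is the directed distance travelled in this bundle. I would navigate the base among its $\le\binom nr$ vertices and, for each fixed image, control the fiber displacement by the directed diameter of the subgroup of $\Sym_r$ generated by the induced fiber moves, which is at most $\ell(\Sym_r)$ by \eqref{eq:sym_compl}. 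Summing the per-level bounds and the $\le n-1$ rank drops gives
\[\ell(\Tran_n)\;\le\;n+\sum_{r=1}^{n}\binom nr\bigl(\ell(\Sym_r)+1\bigr).\]
Using $\ell(\Sym_r)=\mathrm{e}^{\sqrt{r\ln r}\,(1+o(1))}$, the $r$-th summand is $2^{nH(r/n)(1+o(1))}\,\mathrm{e}^{\sqrt{r\ln r}(1+o(1))}$ with $H$ the binary entropy; the entropy factor is maximised and locally flat at $r=n/2$ (value $2^n$) while $\mathrm{e}^{\sqrt{r\ln r}}$ is increasing, so the product is optimised at $r=\tfrac n2(1+o(1))$ with value $2^n\mathrm{e}^{\sqrt{\frac n2\ln n}(1+o(1))}$. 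Since there are only $n+1$ terms, the sum has the same order as its maximal term, yielding $\ell(\Tran_n)\le 2^n\mathrm{e}^{\sqrt{\frac n2\ln n}(1+o(1))}$.

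For the matching lower bound I would realise this optimum at $r=\floor{n/2}$ by superimposing two mechanisms on $\Omega_n$: Sokolovskii's rank-$r$ construction, whose image is forced to run through essentially all $\binom nr$ subsets of size $r$, and a group construction on the $r$-element image that forces directed complexity $\mathrm{e}^{\sqrt{r\ln r}(1+o(1))}=\mathrm{e}^{\sqrt{\frac n2\ln n}(1+o(1))}$ (available from the estimate behind \eqref{eq:sym_compl}, e.g.\ a Landau-type element of near-maximal order together with Babai's diameter bound). Arranging the subset-cycling and the in-fiber permutation to act on interleaved coordinates, and choosing their periods to be essentially coprime, forces the length of any factorization of the target to be the product of the two, giving the lower bound $2^n\mathrm{e}^{\sqrt{\frac n2\ln n}(1+o(1))}$ and hence equality.

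The main obstacle is the per-level bundle estimate: because image-changing moves simultaneously twist the fiber, the permutations of $T$ available once the image returns to a given $T$ depend on the route taken, so one must show the relevant generated subgroup of $\Sym_r$ still has directed diameter at most $\ell(\Sym_r)$ and that the base excursions needed to collect the required fiber generators do not inflate the count beyond the claimed $\binom nr(\ell(\Sym_r)+1)$. Getting the lower-bound construction to multiply the two factors exactly, rather than merely add them, is the secondary difficulty.
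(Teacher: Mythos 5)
Your overall strategy coincides with the paper's: an upper bound obtained by splitting a minimal factorization into $\le n-1$ rank drops plus constant-rank stretches, bounding each rank-$r$ stretch by roughly $\binom{n}{r}\bigl(\ell(\Sym_r)+1\bigr)$, and optimizing $\ln\binom{n}{r}+\sqrt{r\ln r}$ at $r\sim n/2$; and a lower bound that cycles the image through all $r$-subsets while accumulating a Landau permutation on the fiber. The difficulty you flag as the ``main obstacle'' is, however, exactly the content of the paper's key Lemma~\ref{lm:tran_main}, and your sketch does not close it. The paper's resolution is a saturation trick that sidesteps the route-dependence you worry about: fix one vertex $D$ of the strongly connected $r$-graph and, at every vertex $D_i$ of the given walk, insert a closed excursion $\mathbf{c}_i$ through $D$ raised to the power $m_i$ equal to the order of the permutation $[\mathbf{c}_i]$, so the inserted material acts as the identity and the walk stays equivalent. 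The saturated walk now visits $D$ repeatedly and decomposes at these occurrences into closed subwalks of length at most $2\binom{n}{r}-1$, each inducing a permutation of the \emph{same} set $D$; since $\ell(\Sym_r)$ is by definition (\ref{eq:set_compl}) a maximum over \emph{all} bases $\cB\subseteq\Sym_r$ (not only generating sets of $\Sym_r$), the product of all these permutations can be rewritten as a product of at most $\ell(\Sym_r)$ of them, giving the per-level bound $2\binom{n}{r}(\ell(\Sym_r)+1)-1$ (Lemma~\ref{lm:tran_main}; the harmless factor $2$ is absent from your claimed bound). One also needs the small additional step of Lemma~\ref{lm:tran_main2}, decomposing a general walk along strongly connected components, which your ``navigate the base'' phrasing glosses over.

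On the lower bound, the multiplicative (rather than additive) combination of the two factors is not obtained in the paper by coprimality of periods, and I would caution that your interleaved-coordinates/CRT variant needs care: with independent mechanisms on disjoint coordinates a basis could in principle advance one cycle without the other, so nothing forces the product. The paper instead forces the product structurally with a sink state: the inputs are indexed by the $m=\binom{n-1}{k}$ subsets $D_1,\dots,D_m$ of an $(n-1)$-element set, input $i$ maps $D_i$ bijectively onto $D_{i+1}$ (with a maximal-order permutation $\pi\in\Sym_k$ applied on the wrap-around) and sends every state outside $D_i$ to an absorbing state $n\notin D_1$. Hence the only words mapping $D_1$ into $D_1$ are exact powers $(12\ldots m)^s$, each application of $\pi$ costs exactly $m$ letters, and the target $\pi^{r_k-1}$ (with $r_k$ Landau's function) has complexity at least $m(r_k-1)$; taking $k=\floor{n/2}$ gives $2^n\mathrm{e}^{\sqrt{\frac{n}{2}\ln n}(1+o(1))}$. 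Only the Landau element is needed here; Babai's diameter bound enters solely through the upper bound for $\ell(\Sym_k)$ in (\ref{eq:sym_compl}).
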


As we mentioned before, Sokolovskii discovered (see \cite{sokolovskii:1976:en}) the relationship between functions $\ell(\Tran_n)$ and $\ell(n,k)$. He proved in particular that
\begin{equation}\label{th:sok:up}
    \ell(n,k) \le (k-1)\ell(\Tran_n).\\
\end{equation}

The \emph{binary entropy} function denoted by $H_2(x)$ is defined as follows:
\begin{equation*}
    H_2(x) \bydef -x\log_2 x - (1-x)\log_2 (1-x),\text{ where }x\in (0,1).\\
\end{equation*}

Combining inequalities (\ref{th:sok:low1}), (\ref{th:sok:low2}), and (\ref{th:sok:up}) with theorem \ref{th:tran}  the second main result of the paper can be proved.
\begin{theorem}\label{th:diag}
We have $\log_2 \ell(n,k) \sim \varphi(a) n$ as $n\to\infty$ and $k/n\to a\in (0,1)$, where $\varphi(a)=H_2(a)$ if $a < 1/2$ and $\varphi(a)=1$ if $a \ge 1/2$ (see Fig. \ref{fg:phi}).
\end{theorem}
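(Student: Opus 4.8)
The plan is to show that $\frac{1}{n}\log_2 \ell(n,k)\to\varphi(a)$ by squeezing this ratio between matching lower and upper bounds, treating the regimes $a<\tfrac12$ and $a\ge\tfrac12$ separately. For the lower bounds I would invoke Sokolovskii's estimates directly. When $a<\tfrac12$, inequality (\ref{th:sok:low1}) gives $\ell(n,k)\ge\binom{n-1}{k-1}$; writing $\binom{n-1}{k-1}=\frac{k}{n}\binom{n}{k}$ and applying Stirling's approximation to $\binom{n}{k}$ with $k/n\to a$ yields $\log_2\binom{n-1}{k-1}=(H_2(a)+o(1))n$, since the prefactor $k/n$ is subexponential. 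Hence $\liminf\frac1n\log_2\ell(n,k)\ge H_2(a)$. When $a\ge\tfrac12$, inequality (\ref{th:sok:low2}) gives the central binomial coefficient $\binom{n-2}{\floor{(n-2)/2}}$, whose logarithm is $(1+o(1))n$ independently of $k$, so $\liminf\frac1n\log_2\ell(n,k)\ge 1=\varphi(a)$ (at the boundary $a=\tfrac12$ with $k\le n/2$ one uses (\ref{th:sok:low1}) instead, whose logarithm is also $(H_2(\tfrac12)+o(1))n=(1+o(1))n$).

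The matching upper bound in the regime $a\ge\tfrac12$ is immediate from the tools at hand. Combining (\ref{th:sok:up}) with Theorem \ref{th:tran},
\[
\log_2\ell(n,k)\le\log_2(k-1)+\log_2\ell(\Tran_n)=\log_2(k-1)+n+(\log_2\mathrm{e})\sqrt{\tfrac{n}{2}\ln n}\,(1+o(1)).
\]
Since $\log_2(k-1)\le\log_2 n=o(n)$ and $\sqrt{\tfrac n2\ln n}=o(n)$, this equals $(1+o(1))n$, matching the lower bound and giving $\frac1n\log_2\ell(n,k)\to 1=\varphi(a)$ for $a\ge\tfrac12$.

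The crux is the upper bound for $a<\tfrac12$, where the raw estimate (\ref{th:sok:up}) only yields $\limsup\le 1$, strictly above the target $H_2(a)$. Here I would sharpen Sokolovskii's reduction by exploiting that, starting from the $k$-element set $S$, every image $\tranf(S,\alpha)$ has size at most $k$: the entire distinguishing process lives among subsets of $\Omega_n$ of size $\le k$ rather than among all of $\Tran_n$. Re-examining the construction behind Theorem \ref{th:tran}—where the dominant factor $2^n$ arises as the number $\binom{n}{n/2}$ of images of the extremal size and the subexponential factor $\mathrm{e}^{\sqrt{(n/2)\ln n}(1+o(1))}$ is the symmetric-group contribution $\ell(\Sym_{n/2})$—one replaces $\binom{n}{n/2}$ by $\sum_{i\le k}\binom{n}{i}=2^{(H_2(a)+o(1))n}$ (valid since $k\le n/2$ and $\binom ni$ increases up to $n/2$) and $\ell(\Sym_{n/2})$ by $\ell(\Sym_k)=\mathrm{e}^{\sqrt{k\ln k}(1+o(1))}=\mathrm{e}^{o(n)}$. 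This produces a restricted diameter of order $2^{(H_2(a)+o(1))n}$, hence $\ell(n,k)\le(k-1)\,2^{(H_2(a)+o(1))n}$, matching the lower bound.

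I expect this last step to be the main obstacle: one must verify that the proof of Theorem \ref{th:tran} genuinely localizes to the sub-poset of $\le k$-element subsets—that restricting attention to transformations whose iterated images never exceed size $k$ costs only the smaller subset count $\binom{n}{\le k}$ and the smaller group factor $\ell(\Sym_k)$—so that no hidden contribution reintroduces the full $2^n$. Once the restricted estimate $\log_2\ell(n,k)=(H_2(a)+o(1))n$ is secured, combining the two regimes gives $\log_2\ell(n,k)\sim\varphi(a)n$ with $\varphi(a)=H_2(a)$ for $a<\tfrac12$ and $\varphi(a)=1$ for $a\ge\tfrac12$, as claimed.
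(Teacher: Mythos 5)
Your overall architecture coincides with the paper's: lower bounds from (\ref{th:sok:low1})--(\ref{th:sok:low2}) plus the entropy asymptotics, the regime $a\ge 1/2$ handled by (\ref{th:sok:up}) together with Theorem \ref{th:tran}, and the regime $a<1/2$ requiring a version of the reduction that is localized to subsets of size at most $k$. The parts you carry out are fine. But the step you defer is exactly the content of the paper's proof, and you have misplaced where the difficulty sits. Verifying that the diameter bound ``localizes'' costs nothing: the paper already proves, before Theorem \ref{th:diag}, that $\ell(\Tran_n^{(p)}) < 2\binom{n}{p}(\ell(\Sym_p)+1)$ (inequality (\ref{eq:tran_up2})), so for $p\le k\le n/2$ the restricted diameter is $\binom{n}{k}\mathrm{e}^{o(n)}=2^{(H_2(a)+o(1))n}$ with no hidden $2^n$. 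What is genuinely missing from your argument is the reduction of $\ell(n,k)$ to $\max_{p\le k}\ell(\Tran_n^{(p)})$, i.e., why a shortest PDS for a $k$-element set can be shortened using these restricted diameters \emph{while remaining a PDS}. A PDS is a statement about outputs, not about induced transformations: replacing a segment $\alpha_j$ of the input word by a shorter word $\alpha_j'$ inducing the same transformation on the current image set changes the outputs emitted during that segment, and one must argue that no distinguishing information is lost.

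The paper handles this by tracking two monotone quantities along the PDS $\alpha=a_1\dots a_\ell$: the image size $k_i=\abs{\tranf(S,a_1\ldots a_i)}$ and the number of blocks $r_i$ of the initial-state-uncertainty partition. It cuts $\alpha$ at the at most $2(n-1)$ indices where either quantity changes, so that within each segment the partition does not refine; hence all distinguishing happens at the cut letters $a_{i_j}$, and since the replacement segments $\alpha_j'$ deliver every state of $S$ to the same place as the originals, each pair of states is still separated by the first cut letter that separated it under $\alpha$. Without this bookkeeping (or an equivalent argument) your inequality $\ell(n,k)\le(k-1)\cdot 2^{(H_2(a)+o(1))n}$ is an assertion rather than a proof: Sokolovskii's reduction (\ref{th:sok:up}) is quoted in the paper without proof, so ``sharpening'' it requires you to actually supply the reduction, and the transformation-semigroup estimate alone does not do so.
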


\begin{figure}\center
\begin{tikzpicture}[std-opt]
\tikzstyle{every text node part}=[font=\footnotesize]
\begin{scope}[xscale=3,yscale=2]
    \draw[domain=0.01:0.5,smooth]
        (0,0) coordinate (O) --
        plot (\x,{-\x*log2(\x)-(1-\x)*log2(1-\x)}) --
        (0.5,1) -- (1,1);
    \draw     (O)      node[point,label=below right:0] {} --
              +(0.5,0) node[point,label=below:$\frac{1}{2}$] (k) {} --
              (1,0) coordinate (X)     node[point,label=below:1] {};
    \draw     (O) --
              (0,1)        node[point,label=left:$1$] (Hk) {};
    \draw[dashed] (Hk) -- (Hk -| k) node[point] {} -- (k);
\end{scope}
    \draw (O) -- +(-0.7,0);
    \draw (O) -- +(0,-0.7);
    \draw[->] (X) -- +(1,0) node[right]  {$a$};
    \draw[->] (Hk) -- +(0,1) node[above] {$\varphi(a)$};
\end{tikzpicture}
\caption{Function $\varphi(a)$}\label{fg:phi}
\end{figure}
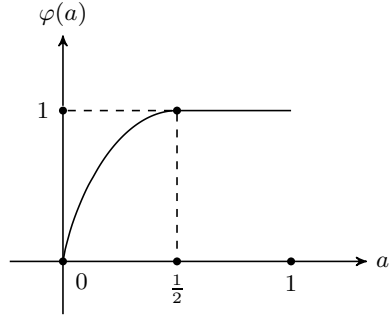

\section{Proofs of the Main Results}

Before we proceed to the formal proofs of the main results, let us give some definitions and state some useful lemmas first. Consider the set $\Tran_n^{(k)}$ of all bijections ${f\colon D \to D'}$ such that $D,D'\subseteq \Omega_n$ and $\abs{D}=\abs{D'}=k$. Suppose ${\cB\subseteq \Tran_n}$, ${f\in \Tran_n^{(k)}}$, ${f\colon D\to D'}$, and there is a map ${g\in \langle \cB \rangle}$ such that $f=g|_{D}$. Then we denote by $\ell_B(f)$ the minimum of  $\ell_B(g)$ over all such maps $g$, or $0$ if there are no such maps. The value $\ell_\cB(f)$ is also called the \emph{complexity} of $f$ over~$\cB$. Consider the following function:
\[\ell(\Tran_n^{(k)}) \bydef \max_{\cB\subseteq \Tran_n, f\in \Tran_n^{(k)}} \ell_B(f).\]

If $f(D)=D'$, then we say that \emph{$f$ transforms $D$ into $D'$} and write $D \xrightarrow{f} D'$.
For any set of maps $\cB\subseteq \Tran_n$ the \emph{$k$-graph over $\cB$} is the directed graph $G$ (loops and multiple edges are permitted\footnote{Sometimes such graphs are called \emph{pseudographs}.}) with the set of vertices
\[V(G)\bydef \{D \mid D \subseteq \Omega_n, \abs{D}=k \},\]
the set of arcs
\[E(G)\bydef \{f\in \Tran_n^{(k)} \mid f = g|_D\text{ for some } g\in \mathcal{B}\text{ and }D\in V(G) \},\]
and every arc $f$ goes from the vertex $D$ to the vertex $D'$ whenever $D \xrightarrow{f} D'$.

A \emph{walk} from the vertex $D$ to the vertex $D'$ in the $k$-graph $G$ is a sequence of vertices and arcs $\mathbf{w}=D_0,f_1,D_1, \ldots, f_\ell,D_\ell$
such that $D_0=D$, $D_\ell=D'$ and the arc $f_i$ goes from the vertex $D_{i-1}$ to the vertex $D_i$ for $i=1,\ldots,\ell$,
or, in terms of maps,\[D_0\xrightarrow{f_1} D_1 \xrightarrow{f_2}\cdots \xrightarrow{f_\ell} D_\ell.\] We often omit
vertices in walks and write simply $\mathbf{w}=f_1,\ldots, f_\ell$. The number $\ell$ is called the \emph{length} of the walk $\mathbf{w}$ and is denoted by $\len{\mathbf{w}}$. By a \emph{subwalk} of $\mathbf{w}$ we mean a subsequence $f_i,f_{i+1},\ldots,f_j$, $1 \le i < j \le \ell$.

For any walk $\mathbf{w}=f_1,\ldots,f_\ell$ from $D$ to $D'$ consider the map $[\mathbf{w}]\colon D\to D'$, where $[\mathbf{w}] = f_1 \ldots f_\ell$ (the composition of the maps $f_1,\ldots, f_\ell$). Two walks $\mathbf{w}$ and $\mathbf{w}'$ are called \emph{equivalent} if $[\mathbf{w}]=[\mathbf{w}']$. For a \emph{closed} walk $\mathbf{w}$, which starts and ends in the same vertex $D$, the map $[\mathbf{w}]$ is a permutation of $D$. For any closed walk~$\mathbf{w}$, by definition, put
\[\mathbf{w}^k \bydef \underbrace{\mathbf{w}, \ldots, \mathbf{w}}_k,\quad k\in \N.\]
It is readily seen that $\mathbf{w}^k$ is also a closed walk and $[\mathbf{w}^k]=[\mathbf{w}]^k$.

The next lemma is an immediate consequence of the previous definitions.
\begin{lemma}\label{lm:tran_compl}
Given a basis $\cB\subseteq \Tran_n$ and  a map $f\in\Tran_n^{(k)}$ such that $f\colon D\to D'$ is a restriction of some map from $\langle\cB\rangle$. Consider the $k$-graph $G$ over $\cB$. Then  $\ell_\cB(f)$ is the length of a shortest walk $\mathbf{w}$ in $G$ from $D$ to $D'$ such that $[\mathbf{w}]=f$.
\end{lemma}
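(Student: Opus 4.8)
The plan is to establish the claimed equality by exhibiting a length-preserving correspondence between the two quantities being minimized: on one side, the factorizations $g=g_1\cdots g_\ell$ of maps $g\in\langle\cB\rangle$ satisfying $g|_D=f$, and on the other side, the walks $\mathbf{w}=f_1,\ldots,f_\ell$ in the $k$-graph $G$ from $D$ to $D'$ with $[\mathbf{w}]=f$. Proving that the minimal lengths on the two sides agree then reduces to two inclusions: every factorization yields a walk of the same length, and every walk yields a factorization of the same length.

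For the passage from factorizations to walks I would start with a factorization $g=g_1\cdots g_\ell$ realizing $\ell_\cB(g)$ and satisfying $g|_D=f$, and set $D_0\bydef D$ and $D_i\bydef (g_1\cdots g_i)(D)$ for $i=1,\ldots,\ell$. The one place where any content enters is the observation that, since $\abs{D_0}=\abs{D_\ell}=\abs{D'}=k$ while $\abs{D_i}=\abs{g_i(D_{i-1})}\le\abs{D_{i-1}}$, the sequence of sizes is non-increasing yet returns to $k$; hence $\abs{D_i}=k$ for all $i$ and each $g_i$ restricts to a bijection $f_i\bydef g_i|_{D_{i-1}}\colon D_{i-1}\to D_i$. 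Each $f_i$ is therefore a legitimate arc of $G$ (a restriction of $g_i\in\cB$ to a $k$-set), so $\mathbf{w}=f_1,\ldots,f_\ell$ is a walk of length $\ell$ from $D$ to $D_\ell=g(D)=D'$, and $[\mathbf{w}]=f_1\cdots f_\ell$ agrees with $g|_D=f$ on $D$, giving $[\mathbf{w}]=f$. This shows the shortest such walk has length at most $\ell_\cB(f)$.

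The reverse passage is routine: given a walk $\mathbf{w}=f_1,\ldots,f_\ell$ with $[\mathbf{w}]=f$, each arc is by definition $f_i=g_i|_{D_{i-1}}$ for some $g_i\in\cB$, and the composition $g\bydef g_1\cdots g_\ell\in\langle\cB\rangle$ satisfies $g|_D=f_1\cdots f_\ell=f$, whence $\ell_\cB(f)\le\ell_\cB(g)\le\ell$; taking the minimum over walks gives the opposite inequality. Combining the two bounds proves the lemma. I expect the only genuine obstacle to be the size-stability (hence injectivity) argument of the second paragraph, which is exactly what certifies that a shortest factorization never collapses the running image below $k$ elements and therefore never leaves the vertex set of $G$; once that point is secured, both directions are bookkeeping, in line with the authors' remark that the statement is an immediate consequence of the definitions.
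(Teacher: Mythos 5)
Your proof is correct, and it is precisely the unpacking of the definitions that the paper has in mind: the paper gives no proof at all, stating only that the lemma ``is an immediate consequence of the previous definitions.'' The size-stability observation in your second paragraph (the non-increasing image sizes must all equal $k$ because the final image has size $k$, so every $g_i$ restricts to a bijection and hence to a genuine arc) is exactly the one non-trivial point, and you handle it correctly.
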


We say that a vertex $D$ is \emph{reachable from} a vertex $D'$ in a $k$-graph $G$ if there is a walk in $G$ from $D$ to $D'$. Vertices $D$ and $D'$ are called \emph{mutually reachable} if $D$ is reachable from $D'$ and $D'$ is reachable from $D$. A $k$-graph is called \emph{strongly connected} if all its vertices are mutually reachable. Obviously, mutual reachability is an equivalence relation on vertices and  it partitions the set of vertices $V(G)$ into equivalence classes $V(G) = V_1 \cup \ldots \cup V_r$. Subgraphs $G_1,\ldots,G_r$ induced by $V_1,\ldots,V_r$ are called \emph{strongly connected components} of $G$. Evidently, every strongly connected component is strongly connected.

\begin{lemma}\label{lm:tran_main}
For any walk $\mathbf{w}$ in a strongly connected $k$-graph $G$ over $\mathcal{B}\subseteq \Tran_n$ there is an equivalent walk $\mathbf{w}'$ such that $\len{\mathbf{w}'} < 2\abs{V(G)}\cdot(\ell(\Sym_k)+1)-1$.
\end{lemma}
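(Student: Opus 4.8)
The plan is to realize the map $[\mathbf{w}]$ by a short walk assembled from a small generating set of the ``holonomy'' group of closed walks at the starting vertex, using $\ell(\Sym_k)$ to bound how many generators are needed. Throughout write $\mathbf{w}\colon D_0\to D_\ell$ and put $N=\abs{V(G)}$. Since $G$ is strongly connected, I first fix a shortest-path out-tree and a shortest-path in-tree rooted at $D_0$; these supply, for every vertex $E$, reference walks $p_E\colon D_0\to E$ and $q_E\colon E\to D_0$ with $\len{p_E},\len{q_E}\le N-1$ and $p_{D_0}=q_{D_0}=\eword$. Let $G_{D_0}$ be the set of permutations $[\mathbf{c}]$ of $D_0$ realized by closed walks $\mathbf{c}$ at $D_0$; it is a submonoid of $\Sym_{D_0}$ and hence, being finite, a subgroup. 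The first reduction is to realizing a single element of $G_{D_0}$: both $\mathbf{w}\,q_{D_\ell}$ and the reference round trip $p_{D_\ell}q_{D_\ell}$ are closed walks at $D_0$, so $\Pi:=[\mathbf{w}][p_{D_\ell}]^{-1}=[\mathbf{w}\,q_{D_\ell}]\,[p_{D_\ell}q_{D_\ell}]^{-1}\in G_{D_0}$ and $[\mathbf{w}]=\Pi\,[p_{D_\ell}]$. Thus it suffices to realize $\Pi$ by a short closed walk at $D_0$ and then append $p_{D_\ell}$.

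For the generating set I would, for every arc $f\colon E\to E'$, define $\gamma_f:=[p_E\,f\,q_{E'}]\in G_{D_0}$, which is realized by a closed walk at $D_0$ of length at most $2N-1$. The key lemma is that $\{\gamma_f\}$ generates $G_{D_0}$. I would prove this by showing $\Psi(\mathbf{u}):=[\mathbf{u}][q_E]\in K:=\langle\{\gamma_f\}\rangle$ for every walk $\mathbf{u}\colon D_0\to E$, by induction on $\len{\mathbf{u}}$. The inductive step for $\mathbf{u}=\mathbf{u}'f$ (with $\mathbf{u}'\colon D_0\to E'$ and $f\colon E'\to E$) rests on the purely algebraic identity $\Psi(\mathbf{u})=\Psi(\mathbf{u}')\cdot\bigl(\Psi(p_{E'})^{-1}\gamma_f\bigr)$, which one checks directly by cancelling $[q_{E'}]^{-1}[p_{E'}]^{-1}[p_{E'}]$. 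Here $\Psi(\mathbf{u}')\in K$ by induction, $\gamma_f\in K$ by definition, and $\Psi(p_{E'})\in K$ because $p_{E'}q_{E'}$ is literally the generator walk of the in-arc of the out-tree at $E'$, so $\Psi(p_{E'})=\gamma_{f_{\mathrm{tree}}(E')}$. Taking $\mathbf{u}$ a closed walk at $D_0$ gives $[\mathbf{u}]=\Psi(\mathbf{u})\in K$, whence $K=G_{D_0}$.

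To assemble the bound, since $\{\gamma_f\}\subseteq\Sym_{D_0}\cong\Sym_k$ generates $G_{D_0}\ni\Pi$, the definition of $\ell(\Sym_k)$ (with basis $\{\gamma_f\}$ and element $\Pi$) gives an expression $\Pi=\gamma_{f_1}\cdots\gamma_{f_s}$ with $s\le\ell(\Sym_k)$. Concatenating the corresponding generator walks realizes $\Pi$ by a closed walk at $D_0$ of length at most $s(2N-1)\le\ell(\Sym_k)(2N-1)$; appending $p_{D_\ell}$ (length $\le N-1$) produces $\mathbf{w}'$ with $[\mathbf{w}']=[\mathbf{w}]$ and $\len{\mathbf{w}'}\le\ell(\Sym_k)(2N-1)+(N-1)$. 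A short computation shows this is strictly below $2N(\ell(\Sym_k)+1)-1$, the gap being $N+\ell(\Sym_k)>0$, which is exactly the stated inequality.

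The one genuinely delicate point is the generation lemma of the second paragraph. Because the arcs are directed and a single arc need not be reversible as a walk, the familiar ``spanning tree plus one chord'' description of the fundamental/holonomy group does not apply verbatim: naive telescoping of $\gamma_{g_1}\cdots\gamma_{g_t}$ leaves uncancelled round-trip permutations $[q_Ep_E]$ at the intermediate vertices, so it does \emph{not} return $[\mathbf{c}]$. Handling this correctly forces the use of separate out- and in-trees together with the auxiliary identity $\Psi(p_E)=\gamma_{f_{\mathrm{tree}}(E)}$, and the induction on $\Psi$ rather than on the walk $\mathbf{c}$ itself. Everything after that is routine bookkeeping of reference-path lengths, which comfortably fits under the bound.
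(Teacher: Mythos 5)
Your proof is correct, but it follows a genuinely different route from the paper's. The paper works directly with the given walk $\mathbf{w}=D_0,f_1,\dots,f_\ell,D_\ell$: at every intermediate vertex $D_i$ it inserts a closed detour $\mathbf{c}_i^{m_i}$ through a fixed vertex $D$, where $m_i$ is the order of the permutation $[\mathbf{c}_i]$, so that the inserted pieces act as identities and equivalence is automatic; the resulting ``$D$-saturated'' walk visits $D$ many times, the closed subwalks between consecutive visits (each of length at most $2\abs{V(G)}-1$) are taken as the basis $\cB\subseteq\Sym_k$, and $\ell(\Sym_k)$ is applied to their product. That basis depends on $\mathbf{w}$, and the order trick lets the paper avoid ever discussing the group of closed-walk permutations. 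You instead build a canonical, walk-independent generating set: the tree-conjugated arc generators $\gamma_f=[p_E\,f\,q_{E'}]$, reduce $[\mathbf{w}]$ to a single element $\Pi$ of the holonomy group at $D_0$ times a reference path, and apply $\ell(\Sym_k)$ with basis $\{\gamma_f\}$. The price is the generation lemma, which you rightly flag as the delicate step (the directed setting really does break the naive spanning-tree telescoping, and your $\Psi$-induction with the identity $\Psi(p_{E'})=\gamma_{f_{\mathrm{tree}}(E')}$ handles it correctly); you also quietly use that the finite subsemigroups $G_{D_0}$ and $\langle\{\gamma_f\}\rangle$ of $\Sym_{D_0}$ are groups, which is standard but worth stating once more when you invert $\Psi(p_{E'})$. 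What your version buys is structural transparency (an explicit holonomy group with an explicit generating set) and a marginally sharper count, $\ell(\Sym_k)(2\abs{V(G)}-1)+\abs{V(G)}-1$ versus the paper's $\ell(\Sym_k)(2\abs{V(G)}-1)+2\abs{V(G)}-2$; what the paper's version buys is brevity, since equivalence of the saturated walk is immediate and no generation lemma is needed. Both land strictly under the stated bound $2\abs{V(G)}\cdot(\ell(\Sym_k)+1)-1$.
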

\begin{proof}
 Given a walk $\mathbf{w}= D_0, f_1, D_1,\ldots,f_\ell,D_\ell$ in the $k$-graph~$G$. For any vertex~$D$ in $G$ we define a walk~$\mathbf{w}_D$, equivalent to $\mathbf{w}$, called a \emph{$D$-saturation} of $\mathbf{w}$,  as follows. For every vertex $D_i$, $0\le i \le \ell$, we consider two paths\footnote{A \emph{path} is a walk in which all vertices and edges are distinct.}:  $\mathbf{p}_{D_i\to D}$ from $D_i$ to $D$ and $\mathbf{p}_{D\to D_i}$ from $D$ to $D_i$. Connecting them, we obtain the closed walk $\mathbf{c}_i=\mathbf{p}_{D_i\to D},\mathbf{p}_{D\to D_i}$. Since $G$ is strongly connected, it follows that these two paths exist and $\len{\mathbf{p}_{D\to D_i}}$,  $\len{\mathbf{p}_{D\to D_i}}$ are bounded by $\abs{V(G)}-1$.
For every closed walk $\mathbf{c}_i$ we consider the permutation $\pi_i=[\mathbf{c}_i]$ of the set~$D_i$. Let $m_i$ be the \emph{order} of $\mathbf{c}_i$, i.e., the smallest positive integer $m$ such that $\pi_i^m  = \id_{D_i}$ (where $\id_M$ denotes the identity map on $M$). Finally, by definition, put
\begin{equation}\label{eq:satur}
\mathbf{w}_D \bydef \mathbf{c}_0^{m_0}, f_1, \mathbf{c}_1^{m_1},\ldots, f_\ell, \mathbf{c}_\ell^{m_\ell}.
\end{equation}
It now follows that
\[[\mathbf{w}_D] = \pi_0^{m_0} f_1 \pi_1^{m_1} \ldots f_\ell \pi_\ell^{m_\ell} =
    \id_{D_0} f_1 \id_{D_1} f_2 \ldots f_\ell \id_{D_\ell} = f_1 f_2 \ldots f_\ell =
    [\mathbf{w}],\]
and we see that the $D$-saturation $\mathbf{w}_{D}$ is equivalent to the walk $\mathbf{w}$.

Consider all the occurrences of the vertex $D$ in the walk $\mathbf{w}_D$. These occurrences partition  $\mathbf{w}_D$ into subwalks, i.e., $\mathbf{w}_D = \mathbf{w}_0,\mathbf{w}_1,\ldots,\mathbf{w}_s,\mathbf{w}_{s+1}$, where $\mathbf{w}_0$ is the subwalk from the begin to the first occurrence of $D$, $\mathbf{w}_{s+1}$ is the subwalk from the last occurrence of $D$ to the end, and the closed subwalks $\mathbf{w}_1,\ldots,\mathbf{w}_s$ connect successive occurrences of $D$. Using (\ref{eq:satur}) and recalling that $\mathbf{c}_i=\mathbf{p}_{D_i\to D},\mathbf{p}_{D\to D_i}$, where  $\len{\mathbf{p}_{D\to D_i}}$ and $\len{\mathbf{p}_{D\to D_i}}$ are bounded by $\abs{V(G)}-1$, we have $\len{\mathbf{w}_0}\le \abs{V(G)}-1$, $\len{\mathbf{w}_{s+1}}\le \abs{V(G)}-1$, and $\len{\mathbf{w}_i}\le 2\abs{V(G)}-1$ for $i=1,\ldots,s$.
Let $\pi_1 = [\mathbf{w}_1],\ldots,\pi_s = [\mathbf{w}_s]$. Consider the set $\cB=\{\pi_1,\ldots,\pi_s\}$ and the permutation $\pi = \pi_1\ldots \pi_s\in \langle \cB \rangle$. Now note that $\pi,\pi_1,\ldots,\pi_s$ are permutations of the same $k$-element set $D$. Thus, taking into account (\ref{eq:set_compl}), we obtain $\pi = \pi_{i_1}\ldots\pi_{i_r}$, where $r\le \ell(\Sym_k)$.

Finally, let $\mathbf{w}'\bydef \mathbf{w}_0, \mathbf{w}_{i_1}, \ldots, \mathbf{w}_{i_r}, \mathbf{w}_{s+1}$. Then we get
\[[\mathbf{w}'] = [\mathbf{w}_0]\pi_{i_1}\ldots\pi_{i_r}[\mathbf{w}_{s+1}] = [\mathbf{w}_0]\pi_{1}\ldots\pi_{r}[\mathbf{w}_{s+1}] = [\mathbf{w}_D]=[\mathbf{w}],\]
i.e., $\mathbf{w}'$ is equivalent to $\mathbf{w}$. Moreover, we have
\[\len{\mathbf{w}'} \le 2(\abs{V(G)} - 1) + (2\abs{V(G)}-1)\cdot \ell(\Sym_k) < 2\abs{V(G)}\cdot (\ell(\Sym_k)+1)-1.\]
The lemma is proved.\qed
\end{proof}

In the previous lemma we deal with strongly connected $k$-graphs only. More general case is considered in the next lemma.
\begin{lemma}\label{lm:tran_main2}
For any walk $\mathbf{w}$ in a $k$-graph $G$ over $\mathcal{B}\subseteq \Tran_n$ there is an equivalent walk $\mathbf{w}'$ such that $\len{\mathbf{w}'} < 2\abs{V(G)}\cdot(\ell(\Sym_k)+1)$.
\end{lemma}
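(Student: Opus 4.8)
The plan is to reduce the general case to the strongly connected case already settled in Lemma~\ref{lm:tran_main}, by cutting the walk along the strongly connected components of $G$. Given $\mathbf{w} = D_0, f_1, D_1, \ldots, f_\ell, D_\ell$, I would group consecutive vertices according to the strongly connected component containing them. This splits $\mathbf{w}$ into maximal subwalks $\mathbf{w}^{(1)}, \ldots, \mathbf{w}^{(t)}$, each lying entirely inside one component $C^{(i)}$, joined by single transition arcs $g_1, \ldots, g_{t-1}$ that cross from one component to the next, so that $\mathbf{w} = \mathbf{w}^{(1)}, g_1, \mathbf{w}^{(2)}, g_2, \ldots, g_{t-1}, \mathbf{w}^{(t)}$. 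A subwalk $\mathbf{w}^{(i)}$ may be trivial, of length $0$, when $\mathbf{w}$ merely passes through a component.

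The first key observation I would record is that the components visited are pairwise distinct. Contracting each strongly connected component to a point turns $G$ into a directed acyclic graph (its condensation), so once $\mathbf{w}$ leaves a component it can never return to it; hence $C^{(1)}, \ldots, C^{(t)}$ are distinct and, being strongly connected components, their vertex sets are pairwise disjoint. In particular $\sum_{i=1}^t \abs{V(C^{(i)})} \le \abs{V(G)}$. Next, each subwalk $\mathbf{w}^{(i)}$ is a walk in the strongly connected $k$-graph $C^{(i)}$ (the subgraph induced on $V(C^{(i)})$, whose arcs are again restrictions of maps from $\cB$). Applying Lemma~\ref{lm:tran_main} to each $\mathbf{w}^{(i)}$ yields an equivalent walk $\mathbf{w}'^{(i)}$ inside $C^{(i)}$ with $\len{\mathbf{w}'^{(i)}} < 2\abs{V(C^{(i)})}(\ell(\Sym_k)+1) - 1$. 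Since $\mathbf{w}'^{(i)}$ is equivalent to $\mathbf{w}^{(i)}$ it shares its initial and final vertices, so replacing each segment while keeping the transition arcs produces a genuine walk $\mathbf{w}' = \mathbf{w}'^{(1)}, g_1, \ldots, g_{t-1}, \mathbf{w}'^{(t)}$ with $[\mathbf{w}'] = [\mathbf{w}]$.

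It then remains to add up the lengths: $\len{\mathbf{w}'} = \sum_{i=1}^t \len{\mathbf{w}'^{(i)}} + (t-1)$, and substituting the bound from Lemma~\ref{lm:tran_main} gives $\len{\mathbf{w}'} < 2(\ell(\Sym_k)+1)\sum_{i=1}^t\abs{V(C^{(i)})} - t + (t-1) \le 2(\ell(\Sym_k)+1)\abs{V(G)} - 1 < 2\abs{V(G)}(\ell(\Sym_k)+1)$, which is the claimed estimate. The arithmetic is routine; the only point requiring care is the bookkeeping that makes it balance, namely that the single unit of slack contributed by each component's $-1$ exactly pays for the $+1$ cost of each transition arc.

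The main obstacle, and the one genuinely structural input, is the disjointness of the vertex sets $V(C^{(i)})$, which is what keeps $\sum_i \abs{V(C^{(i)})}$ from exceeding $\abs{V(G)}$ and thereby ties the bound to $\abs{V(G)}$ rather than to a sum over visited components. This in turn rests on the acyclicity of the condensation, i.e.\ on the fact that $\mathbf{w}$ enters each component at most once; everything else is a direct assembly of the per-component estimates supplied by Lemma~\ref{lm:tran_main}.
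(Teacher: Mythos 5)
Your proposal is correct and follows essentially the same route as the paper: split the walk into maximal subwalks lying in distinct strongly connected components (distinct because the condensation is acyclic), apply Lemma~\ref{lm:tran_main} to each piece, and observe that the $-1$ slack per component pays for each connecting arc. The only difference is that you spell out the disjointness/acyclicity justification, which the paper leaves implicit.
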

\begin{proof}
Consider an arbitrary walk $\mathbf{w}$ in $G$. It is readily seen that this walk can be represented as $\mathbf{w} = \mathbf{w}_1, f_1, \mathbf{w}_2,\ldots,f_{s-1}\mathbf{w}_s$, where every subwalk $\mathbf{w}_i$ belongs completely to one strong component $G_i$ of $G$ and all the components $G_1,\ldots,G_s$ are different. On the other hand, from Lemma \ref{lm:tran_main} it follows that for any walk $\mathbf{w}_i$, $1\le i \le s$, there exists an equivalent walk $\mathbf{w}_i'$ such that ${\len{\mathbf{w}_i'} < 2\abs{V(G_i)}\cdot (\ell(\Sym_k)+1)-1}$. Then we let $\mathbf{w}'=\mathbf{w}_1',f_1,\mathbf{w}_2',\ldots,f_{s-1},\mathbf{w}_{s}'$ and obtain that $[\mathbf{w}']=[\mathbf{w}]$. Moreover, we have
\[\len{\mathbf{w}'} = \sum_{i=1}^{s}\len{\mathbf{w}_i'}+s-1 < \left(\sum_{i=1}^s 2\abs{V(G_i)}\right)\cdot\left(\ell(\Sym_k)+1\right) \le 2\abs{V(G)}\cdot (\ell(\Sym_k) + 1).\]
This proves the lemma.\qed
\end{proof}

Consider an arbitrary basis $\cB\subseteq\Tran_n$. It is clear that for the $k$-graph $G$ over $\cB$ we have $\abs{V(G)}=\binom{n}{k}$. Therefore from Lemmas  \ref{lm:tran_compl} and \ref{lm:tran_main2} it follows that
\begin{equation}\label{eq:tran_up2}
\ell(\Tran_n^{(k)}) < 2\binom{n}{k}(\ell(\Sym_k)+1).
\end{equation}

Combining this fact with equality (\ref{eq:sym_compl}), we obtain the following
\begin{lemma}\label{lm:tran_main3}
We have $\ell(\Tran_n^{(k)}) < \binom{n}{k}\mathrm{e}^{\sqrt{k\ln k}(1+o(1))}$ as $n,k\to\infty$.
\end{lemma}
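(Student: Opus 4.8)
The plan is to derive the statement directly from the two facts already in hand, namely the combinatorial upper bound recorded in~(\ref{eq:tran_up2}) and the asymptotic value of $\ell(\Sym_k)$ stated in~(\ref{eq:sym_compl}). No new construction is needed: the entire content of the lemma is an asymptotic simplification of the right-hand side of~(\ref{eq:tran_up2}), where the factor $2\binom{n}{k}$ and the additive term $\ell(\Sym_k)+1$ are reorganized so that everything except $\binom{n}{k}$ is absorbed into a single exponential of the form $\mathrm{e}^{\sqrt{k\ln k}(1+o(1))}$.

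First I would substitute the estimate $\ell(\Sym_k) = \mathrm{e}^{\sqrt{k\ln k}(1+o(1))}$ from~(\ref{eq:sym_compl}) into~(\ref{eq:tran_up2}), obtaining
\[\ell(\Tran_n^{(k)}) < 2\binom{n}{k}\left(\mathrm{e}^{\sqrt{k\ln k}(1+o(1))} + 1\right).\]
The remaining work is then purely asymptotic bookkeeping, showing that both the constant prefactor $2$ and the additive constant $1$ vanish into the $(1+o(1))$ in the exponent. For the additive $1$: since $\sqrt{k\ln k}\to\infty$ as $k\to\infty$, the quantity $\mathrm{e}^{\sqrt{k\ln k}(1+o(1))}$ tends to infinity, and hence $\mathrm{e}^{\sqrt{k\ln k}(1+o(1))} + 1 = \mathrm{e}^{\sqrt{k\ln k}(1+o(1))}$. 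For the prefactor $2$: writing $2 = \mathrm{e}^{\ln 2}$ and noting that $\ln 2 = o(\sqrt{k\ln k})$, we get
\[2\,\mathrm{e}^{\sqrt{k\ln k}(1+o(1))} = \mathrm{e}^{\sqrt{k\ln k}(1+o(1)) + \ln 2} = \mathrm{e}^{\sqrt{k\ln k}(1+o(1))}.\]
Combining these two observations yields $2(\ell(\Sym_k)+1) = \mathrm{e}^{\sqrt{k\ln k}(1+o(1))}$, and therefore $\ell(\Tran_n^{(k)}) < \binom{n}{k}\mathrm{e}^{\sqrt{k\ln k}(1+o(1))}$, which is exactly the asserted bound.

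I do not expect any genuine obstacle here; the argument is simply a verification that multiplying by a bounded constant and adding a bounded constant both preserve the shape $\mathrm{e}^{\sqrt{k\ln k}(1+o(1))}$. The only point deserving a little care is that the $o(1)$ is taken as $k\to\infty$, so the simplification must be applied with the binomial coefficient $\binom{n}{k}$ left untouched, as it carries the full $n$-dependence; this is consistent with the regime $n,k\to\infty$ in the statement, where the exponential factor depends only on $k$.
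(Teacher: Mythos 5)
Your proposal is correct and is essentially identical to the paper's own argument, which derives the lemma in one line by combining inequality~(\ref{eq:tran_up2}) with the asymptotic equality~(\ref{eq:sym_compl}) and absorbing the factor $2$ and the additive $1$ into the $(1+o(1))$ in the exponent. The only substance is the routine bookkeeping you carried out, so nothing is missing.
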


Consider a basis $\cB\subseteq \Tran_n$ and a map $f\in\langle\cB\rangle$. Let $f=f_1\ldots f_\ell$ be a shortest representation of $f$ over $\cB$, i.e., $\ell=\ell_\cB(f)$. Thus we have
\[D_0\xrightarrow{f_1} D_1 \xrightarrow{f_2} \cdots \xrightarrow{f_\ell} D_\ell,\]
where $D_i\bydef \Omega_n f_1\ldots f_i$. Suppose $k_i\bydef\abs{D_i}$ for $i=0,\ldots,\ell$; then we obtain
\[k_0=\ldots=k_{i_1}=r_0>k_{i_1+1}=\ldots=k_{i_2}=r_1>\ldots > k_{i_s + 1}=\ldots=k_\ell=r_s.\]
Therefore we get $f=g_0 f_{i_1} g_1 \ldots g_{s-1} f_{i_s} g_s$, where $g_i\in \Tran_n^{(r_i)}$, $0\le i \le s$.
Thus it is easily shown that $\ell_\cB(f) = s + \ell_\cB(g_0)+\cdots + \ell_\cB(g_s)$. It is clear that $s\le n-1$. Therefore, we have
\begin{equation*}
\ell_\cB(f) \le n-1 + \ell(\Tran_n^{(r_0)}) +\cdots + \ell(\Tran_n^{(r_s)})  < n \max_{0\le i \le s} \left\{\ell(\Tran_n^{(r_i)}) + 1\right\}.
\end{equation*}
Finally, we obtain
\begin{equation}\label{eq:tran_up}
\ell(\Tran_n) <  n \max_{1\le k \le n} \left\{\ell(\Tran_n^{(k)}) + 1\right\}.
\end{equation}

\begin{lemma}\label{lm:tran_up}
We have $\ell(\Tran_n) \le 2^n\mathrm{e}^{\sqrt{\frac{n}{2}\ln n}(1+o(1))}$ as $n\to\infty$.
\end{lemma}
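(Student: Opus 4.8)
The plan is to feed the uniform estimate (\ref{eq:tran_up2}) together with (\ref{eq:sym_compl}) into the reduction (\ref{eq:tran_up}), turning the problem into a one-variable optimization over the block size $k$. Since the prefactor $n$ in (\ref{eq:tran_up}) equals $\mathrm{e}^{O(\ln n)}$ and $\ln n = o(\sqrt{\tfrac n2\ln n})$, it will be absorbed into the final $(1+o(1))$; hence it suffices to bound $M \bydef \max_{1\le k\le n}\{\ell(\Tran_n^{(k)})+1\}$ from above.

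First I would fix $\epsilon>0$. By Lemma \ref{lm:tran_main3} there are $N,K$ (depending on $\epsilon$) with $\ell(\Tran_n^{(k)})+1\le 2\binom nk\mathrm{e}^{(1+\epsilon)\sqrt{k\ln k}}$ for all $n\ge N$ and $k\ge K$; for the finitely many $k<K$ one has $\ell(\Sym_k)=O(1)$, so by (\ref{eq:tran_up2}) those terms contribute at most $\binom nK\cdot O(1)=\mathrm{e}^{O(\ln n)}$, which is negligible against the target. Taking logarithms, estimating $M$ reduces to bounding from above the function $F(k)\bydef \ln\binom nk + (1+\epsilon)\sqrt{k\ln k}$ over $K\le k\le n$.

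The core step is the optimization of $F$. Here I would use two elementary inequalities: the Chernoff-type bound $\binom nk \le 2^n \mathrm{e}^{-2(k-n/2)^2/n}$, valid for every $k$, and $\sqrt{k\ln k}\le\sqrt{k\ln n}$ for $k\le n$. Writing $k=n/2+t$ and applying the concavity estimate $\sqrt{\tfrac n2\ln n + t\ln n}\le \sqrt{\tfrac n2\ln n}+\frac{t\sqrt{\ln n}}{\sqrt{2n}}$, the bound on $F(k)$ becomes $n\ln 2 + (1+\epsilon)\sqrt{\tfrac n2\ln n}$ plus the quadratic $-\frac{2t^2}{n}+(1+\epsilon)\frac{t\sqrt{\ln n}}{\sqrt{2n}}$ in $t$; completing the square shows this last expression never exceeds $\frac{(1+\epsilon)^2}{16}\ln n$. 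Thus $\max_k F(k)\le n\ln2+(1+\epsilon)\sqrt{\tfrac n2\ln n}+O(\ln n)$. I expect this balancing act to be the main obstacle: one must see that the optimal $k$ sits at distance $t=\Theta(\sqrt{n\ln n})$ from $n/2$, and that the resulting gain is only $O(\ln n)=o(\sqrt{\tfrac n2\ln n})$, so that evaluating near $k\approx n/2$ already yields the correct constant $\sqrt{\tfrac n2\ln n}$ — rather than the weaker $\sqrt{n\ln n}$ one would get from the crude monotone bound $\ell(\Sym_k)\le\ell(\Sym_n)$.

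Finally I would exponentiate: $M \le 2^n \mathrm{e}^{(1+\epsilon)\sqrt{\frac n2\ln n}+O(\ln n)} = 2^n\mathrm{e}^{(1+\epsilon)\sqrt{\frac n2\ln n}(1+o(1))}$, whence by (\ref{eq:tran_up}) also $\ell(\Tran_n)<nM = 2^n\mathrm{e}^{(1+\epsilon)\sqrt{\frac n2\ln n}(1+o(1))}$, the $O(\ln n)$ term and the prefactor $n$ disappearing into the $o(1)$. Since $\epsilon>0$ was arbitrary, letting $\epsilon\to0$ after $n\to\infty$ gives $\limsup_n \ln\!\big(\ell(\Tran_n)\,2^{-n}\big)/\sqrt{\tfrac n2\ln n}\le 1$, which is precisely the asserted bound $\ell(\Tran_n)\le 2^n\mathrm{e}^{\sqrt{\frac n2\ln n}(1+o(1))}$.
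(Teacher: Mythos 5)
Your proposal is correct and shares the paper's skeleton --- both plug (\ref{eq:tran_up2}) and (\ref{eq:sym_compl}) into (\ref{eq:tran_up}) and reduce everything to bounding $\max_{1\le k\le n}\bigl\{\ln\binom{n}{k}+\varphi(k)\bigr\}$ with $\varphi(k)\sim\sqrt{k\ln k}$ --- but you execute the crucial optimization quite differently. The paper proceeds softly: it names the maximizer $k_n'=k_n+h_n$, proves $h_n/n\to 0$ by extracting a convergent subsequence and invoking the strict maximality of the entropy function $H$ at $1/2$, and then uses the sandwich (\ref{eq:1}) to conclude that the maximum exceeds the value at $k_n=\ceil{n/2}$ by only $\varphi(k_n')-\varphi(k_n)=o(\sqrt{n\ln n})$. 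You instead bound $F(k)=\ln\binom{n}{k}+(1+\epsilon)\sqrt{k\ln k}$ pointwise: the Hoeffding estimate $\binom{n}{k}\le 2^n\mathrm{e}^{-2(k-n/2)^2/n}$, the tangent-line bound for the concave square root at $k=n/2$, and completing the square in $t=k-n/2$ show that the maximizer sits within $\Theta(\sqrt{n\ln n})$ of $n/2$ and that the maximum exceeds the value at $n/2$ by only $O(\ln n)$, a sharper error term than the paper's. Your uniformity bookkeeping (the explicit $\epsilon$, the separate disposal of the finitely many $k<K$ where $\ell(\Sym_k)=O(1)$ so that those terms contribute only $\mathrm{e}^{O(\ln n)}$) is also more careful than the paper's informal use of $\varphi(k)\sim\sqrt{k\ln k}$ inside a maximum over $k$. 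The cost is reliance on the quantitative Chernoff-type inequality where the paper only needs the qualitative limit (\ref{eq:entropy}) and the asymptotics of the central binomial coefficient; both routes are valid, but yours is self-contained for the upper bound and avoids the subsequence argument.
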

\begin{proof}
Using  (\ref{eq:sym_compl}), (\ref{eq:tran_up2}), and (\ref{eq:tran_up}), we obtain $\ln \ell(\Tran_n) \le \ln n + \ell(n)$, where
\[\ell(n) \bydef \max_{1\le k\le n}\left\{\ln \binom{n}{k}+\varphi(k)\right\},\]
and $\varphi(k)$ is a function such that $\varphi(k)\sim \sqrt{k\ln k}$ as $k\to\infty$.

Recall that the function $\ln \binom{n}{k}$ achieves its maximum value at the point ${k_n = \ceil{n/2}}$ when $n$ is fixed. Suppose $\ln\binom{n}{k}+\varphi(k)$ achieves its maximum value at $k_n'=k_n+h_n$, i.e., $\ell(n)=\ln\binom{n}{k_n'}+\varphi(k_n')$. We claim that $h_n/n\to 0$ as $n\to\infty$. Indeed, in the converse case, we can take $\varepsilon\in (0,1/2)$  such that $\abs{h_n/n}\ge \varepsilon$ holds for an infinite sequence of indexes $n$.
Further, since we have $\abs{h_n/n}\in [\varepsilon,1/2]$ for this sequence; then it has an infinite subsequence $n_1,n_2,\ldots, n_i,\ldots$ such that ${h_{n_i}/n_i\to a\in [-1/2,-\varepsilon]\cup[\varepsilon,1/2]}$ as $i\to\infty$.   On the other hand, it is well known that
\begin{equation}\label{eq:entropy}
\frac{\ln\binom{n}{m}}{n}\to H(p)
\end{equation}
as $n,m\to\infty$ and $m/n\to p\in [0,1]$, where\footnote{Here we assume that $0\cdot\ln 0 = 0$.}
\[H(p) \bydef -p\ln p - (1-p)\ln(1-p)\]
is the \emph{entropy function}. Since $\varphi(k_{n_i})=o(n_i),\varphi(k_{n_i}')=o(n_i)$, ${k'_{n_i} / n_i \to 1/2+a}$, and ${k_{n_i} / n_i \to 1/2}$ as $i\to\infty$, we obtain:
\[
    \lim_{i\to\infty} \frac{\ln \binom{n_i}{k'_{n_i}}+\varphi(k'_{n_i})}{n_i}=H\left(\frac12 + a\right) \ge
    \lim_{i\to\infty} \frac{\ln \binom{n_i}{k_{n_i}}+\varphi(k_{n_i})}{n_i}=H\left(\frac12\right).
\]
The latter contradicts the fact that the function $H(p)$ achieves its maximum value at the point $1/2$ only (see Fig. \ref{fg:entropy}). This contradiction proves that $h_n/n\to 0$ as $n\to\infty$.
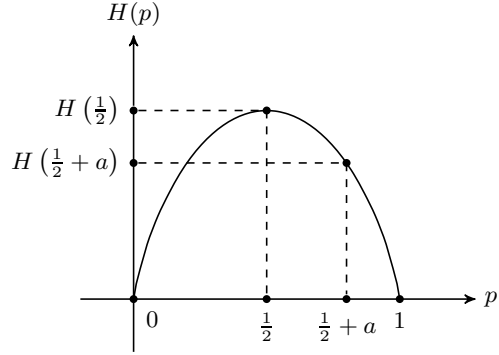
\begin{figure}\center
\begin{tikzpicture}[std-opt]
\tikzstyle{every text node part}=[font=\footnotesize]
\begin{scope}[xscale=3.5,yscale=2.5]
    \draw[domain=0.01:0.99,smooth]
        (0,0) coordinate (O) --
        plot (\x,{-\x*log2(\x)-(1-\x)*log2(1-\x)}) --
        (1,0) coordinate (X);
    \draw     (O)      node[point,label=below right:0] {} --
              +(0.5,0) node[point,label=below:$\frac{1}{2}$] (k) {} --
              +(0.8,0) node[point,label=below:$\frac{1}{2}+a$] (k') {} --
              (X)      node[point,label=below:1] {};
    \draw     (O) --
              (0,.721928)  node[point,label=left:$H\left(\frac12 + a\right)$] (Hk') {} --
              (0,1)        node[point,label=left:$H\left(\frac12\right)$] (Hk) {};
    \draw[dashed] (Hk) -- (Hk -| k) node[point] {} -- (k);
    \draw[dashed] (Hk') -- (Hk' -| k') node[point] {} -- (k');
\end{scope}
    \draw (O) -- +(-0.7,0);
    \draw (O) -- +(0,-0.7);
    \draw[->] (X) -- +(1,0) node[right]  {$p$};
    \draw[->] (Hk) -- +(0,1) node[above] {$H(p)$};
\end{tikzpicture}
\caption{Entropy function $H(p)$}\label{fg:entropy}
\end{figure}

Further, since the function $\ln \binom{n}{k} + \varphi(k_n)$ achieves its maximum value at $k=k'_n$ and the function $\ln \binom{n}{k}$ at $k=k_n$ when $n$ is fixed; then we obtain
\begin{equation}\label{eq:1}
    0 \le \ln \binom{n}{k'_n} + \varphi(k'_n) - \left(\ln \binom{n}{k_n} + \varphi(k_n)\right) \le \varphi(k'_n)-\varphi(k_n).
\end{equation}

Since $h_n / n \to 0$, we see that $k_n\sim k_n'$ and $\sqrt{k_n\ln k_n}\sim\sqrt{k'_n\ln k'_n}$ as $n\to\infty$. From $\varphi(k_n)\sim\sqrt{k_n\ln k_n}$ and $\varphi(k'_n)\sim\sqrt{k'_n\ln k'_n}$ it follows that $\varphi(k_n)\sim\varphi(k'_n)$. Hence $\varphi(k'_n) - \varphi(k_n)=o(\sqrt{k_n\ln k_n})=o(\sqrt{n\ln n})$ as $n\to\infty$.

Thus, recalling that $\binom{n}{\ceil{n / 2}}\sim\sqrt{\frac{2}{\pi n}}\cdot 2^n$ as  $n\to\infty$, from~(\ref{eq:1}) it follows that
\[
    \ln \binom{n}{k'_n} + \varphi(k'_n) = \ln \binom{n}{k_n} + \varphi(k_n) + o(\sqrt{n \ln n})=n\ln 2+\sqrt{\frac{n}2 \ln n} + o(\sqrt{n \ln n}).
\]
Therefore
\[\ell(n)=n\ln 2+\sqrt{\frac{n}2 \ln n}\cdot(1+o(1))\]
and recalling that ${\ln \ell(\Tran_n) \le \ln n + \ell(n)}$,  we obtain
\[\ell(\Tran_n) \le n e^{\ell(n)}= 2^n e^{\sqrt{\frac{n}{2} \ln n}(1+o(1))}\text{ as }n\to\infty.\]
This completes the proof.
\qed
\end{proof}

Each semiautomaton ${\gA=(A,Q,\tranf)}$ induces the transformation semigroup $\Tran(\gA)$ acting on the set of states $Q$ in the following way. For every word $\alpha\in A^*$, let $T_\alpha\colon Q\to Q$ be the map $q\mapsto \tranf(q,\alpha)$.
Then by definition, put
\[\Tran(\gA)=\{T_\alpha\mid \alpha\in A^*\}.\]

It is obvious that $\Tran(\gA)=\langle \cB \rangle$, where $\cB=\{T_a\mid a\in A\}$. Moreover, if $f\in\Tran(\gA)$, then $\ell_\cB(f)$ is equal to the length of a shortest word $\alpha\in A^*$ such that $f=T_\alpha$.

In Lemma \ref{lm:tran_up} we obtain an upper bound on the function $\ell(\Tran_n)$. The next lemma shows that this bound is in some sense exact.

\begin{lemma}\label{lm:tran_low}
We have $\ell(\Tran_n) \ge 2^n\mathrm{e}^{\sqrt{\frac{n}{2}\ln n}(1+o(1))}$ as $n\to\infty$.
\end{lemma}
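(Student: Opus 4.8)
The plan is to deduce the bound from a lower estimate for $\ell(\Tran_n^{(k)})$ with $k=\ceil{n/2}$. First I would record the reduction $\ell(\Tran_n)\ge\ell(\Tran_n^{(k)})$: if $\cB$ and a partial bijection $f\colon D\to D'$ attain $\ell(\Tran_n^{(k)})$, and $g^*\in\langle\cB\rangle$ is a full map extending $f$ with $\ell_\cB(g^*)=\ell_\cB(f)$ (such a $g^*$ exists precisely because $\ell_\cB(f)$ is defined as a minimum over extensions), then $g^*\in\Tran_n$ gives $\ell(\Tran_n)\ge\ell_\cB(g^*)=\ell(\Tran_n^{(k)})$. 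By Lemma~\ref{lm:tran_main3} this is the dominant level, so it suffices to show $\ell(\Tran_n^{(k)})\ge\binom{n}{k}\,\ell(\Sym_k)/\mathrm{poly}(n)$; combined with $\binom{n}{\ceil{n/2}}\sim\sqrt{2/(\pi n)}\,2^n$ and $\ell(\Sym_k)=\mathrm e^{\sqrt{k\ln k}(1+o(1))}=\mathrm e^{\sqrt{\frac n2\ln n}(1+o(1))}$ (Landau's function, cf.~(\ref{eq:sym_compl})), the polynomial loss is absorbed into the $o(1)$ and yields $2^n\mathrm e^{\sqrt{\frac n2\ln n}(1+o(1))}$.

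By Lemma~\ref{lm:tran_compl} the task becomes the construction of a basis $\cB\subseteq\Tran_n$ whose $k$-graph $G$ forces a long realizing walk. I would build $G$ as a \emph{subset odometer coupled to a Landau permutation}. Fix a permutation $\pi$ of a $k$-set of order $m=\ell(\Sym_k)$ (a product of disjoint prime-length cycles realizing Landau's function). The aim is a strongly connected subgraph on a family $\mathcal D$ of $k$-subsets with $\abs{\mathcal D}=N\ge\binom nk/\mathrm{poly}(n)$, together with a \emph{potential} $\phi\colon\mathcal D\to\mathbb Z/N$ such that every arc of $G$ increases $\phi$ by exactly $1$, and such that one full sweep (a closed walk of length $N$ from a base vertex $D_0$ back to $D_0$) realizes exactly $\pi$. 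The generators producing these arcs would be chosen of image size exactly $k$: such a map sends \emph{every} system of distinct representatives of its fibres onto its single image set, so it contributes a bundle of arcs all pointing into one vertex, which is what lets one align each arc with the $+1$ step of $\phi$. The target is then $f=\pi^{m-1}\colon D_0\to D_0$.

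For the lower bound itself: since every arc raises $\phi$ by $1$, there are no backward arcs, so any closed walk $\mathbf w$ at $D_0$ has length $N\cdot w$ with winding number $w\ge1$, and—by the flat holonomy built into the construction—it realizes $[\mathbf w]=\pi^{w}$. To obtain $\pi^{m-1}\ne\id$ one needs $w\equiv m-1\pmod m$, hence $w\ge m-1$ and $\len{\mathbf w}\ge(m-1)N$. Thus $\ell_\cB(f)\ge(m-1)N\ge\binom nk\,\ell(\Sym_k)/\mathrm{poly}(n)$, which is the required estimate.

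The hard part is the construction of the odometer itself, that is, realizing the above $k$-graph by honest transformations while guaranteeing that \emph{no} generator creates a stray arc violating ``$\phi$ increases by $1$''. This is genuinely delicate: no single map can sweep all of $\binom{\Omega_n}{k}$, since its action on $k$-subsets has orbits of size at most its order, which is $\mathrm e^{O(\sqrt{n\ln n})}\ll2^n$; and any image-size-$k$ generator is injective on at least two systems of distinct representatives, hence necessarily contributes several arcs whose heads and tails must all be forced to adjacent $\phi$-levels, while the composite around a sweep must equal $\pi$ and not merely some conjugate of it. The available slack—that $N$ need only be $\binom nk/\mathrm{poly}(n)$, polynomial factors vanishing into the $o(1)$—is what I would exploit: it is enough to sweep a structured $1/\mathrm{poly}$-fraction of the $k$-subsets (Sokolovskii's construction already realizes such a sweep of size $\binom{n-1}{\floor{(n-1)/2}}$), into which $\pi$ can be injected on the $\approx n/2$ points that move during the sweep. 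Verifying that this coupled construction admits the potential $\phi$ with the exact holonomy $\pi$, and thereby rules out all shortcuts, is the crux of the argument.
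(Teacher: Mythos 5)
Your high-level architecture is the right one, and it is in fact the paper's: sweep a family of $N$ many $k$-subsets cyclically with $N$ close to $\binom{n}{k}$, arrange the holonomy of one full sweep to be a permutation $\pi$ of maximal (Landau) order $r_k$, target the map whose restriction to the base vertex is $\pi^{r_k-1}$, and multiply $N$ by $r_k-1$; the reduction $\ell(\Tran_n)\ge\ell(\Tran_n^{(k)})$ and the asymptotic bookkeeping with $k=\floor{n/2}$ are also fine. But the argument stops exactly where it has to start: you say yourself that ruling out stray arcs and shortcuts ``is the crux of the argument'' and leave it unverified, and the design you commit to --- generators of image size exactly $k$, so that every transversal of the fibres is carried onto the single image set $D'$ --- makes that crux genuinely hard. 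Such a generator creates an arc into $D'$ from \emph{every} one of its fibre transversals, and all of these vertices would have to sit at potential level $\phi(D')-1$, or else lie outside $\mathcal{D}$, in which case you must separately prove that no walk can detour through the complement of $\mathcal{D}$ and return to $D_0$ early. Neither is established, so as written there is no proof of $\ell_\cB(f)\ge(m-1)N$; the proposal is an accurate description of what must be constructed rather than a construction.

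The paper resolves this with a device absent from your plan: an absorbing sink. It takes the $D_i$ to be the $m=\binom{n-1}{k}$ many $k$-subsets of $\{1,\dots,n-1\}$ and reserves the state $n$ as a sink fixed by every generator; the $i$-th generator maps $D_i$ bijectively onto $D_{i+1}$ (composing with $\pi$ on the wrap-around) and sends \emph{everything else}, including $n$ itself, to $n$. Every stray arc therefore leads to a vertex containing $n$, and since $n$ is fixed by all generators such a vertex can never return to $D_1\subseteq\{1,\dots,n-1\}$; hence any word carrying $D_1$ back to $D_1$ is forced to be $(12\ldots m)^s$ and realizes exactly $\pi^s$, giving $\ell_\cB(f)\ge m(r_k-1)$ directly. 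Note these generators have rank $k+1$, not $k$ --- the extra image point is the sink --- which is precisely what makes your potential-function bookkeeping unnecessary. Without this or an equivalent mechanism your lower bound does not go through.
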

\begin{proof}
For each $n$ and $k < n$ consider a semiautomaton $\gA=(A,Q,\tranf)$ such that ${Q=\{1,\ldots,n\}}$, ${A=\{1,\ldots,m\}}$, where $m=\binom{n-1}{k}$; and the transition function $\delta$ is defined as follows. First we take in some order all  $k$-element subsets of the set~$\{1,\ldots,n-1\}\subseteq Q$:
\[D_1=\{q^{(1)}_1,\ldots,q^{(1)}_k\}, \dots, D_m=\{q^{(m)}_1,\ldots,q^{(m)}_k\}.\]
Further, we choose a permutation $\pi\in\Sym_k$ of the maximum order,
and define the transition function such that $\tranf(q^{(i)}_j,i) = q^{(i+1)}_j$ and $\tranf(q^{(m)}_j,m)=q^{(1)}_{\pi(j)}$ for $i=1,\ldots, m-1$; ${j=1,\ldots,k}$. Moreover, we let $\tranf(q,i)=n$ whenever $q\notin D_i$ for $i=1,\ldots, m$.

It is not hard to see that we have
\begin{equation}\label{eq:tran_low}
D_1\xrightarrow{\ 1\ } D_2 \xrightarrow{\ 2\ } \cdots \xrightarrow{m-1} D_m \xrightarrow{\ m\ } D_1.
\end{equation}
Furthermore, we claim that we have $\tranf(D_1,\alpha)=D_1$ iff $\alpha=(12\ldots m)^s$, $s\ge 0$. Indeed, if $\alpha=(12\ldots m)^s$; then from (\ref{eq:tran_low}) we obtain $\tranf(D_1,\alpha)=D_1$. Suppose we have $\tranf(D_1,\alpha)=D_1$ for some word $\alpha=a_1\ldots a_\ell\in A^{*}$. Consider the sequence $D_1',\ldots,D_\ell'$, where $D_1'=D_\ell'=D_1$ and $D_{i+1}'=\tranf(D_i',a_i)$ for $i=1,\ldots,\ell-1$. Let us show that
\begin{equation}\label{eq:tran_low2}
a_1=1,a_2=2,\ldots,a_m=m, a_{m+1}=1, a_{m+2}=2,\ldots,a_\ell=m.
\end{equation}
Assume the converse, and let $i$ be the smallest index such that condition (\ref{eq:tran_low2}) does not hold for $a_i$. Then it is readily seen that
\[D_1'=D_1,D_2'=D_2,\ldots,D_i'=D_i\text{ and }n\in D_{i+1}'\ne D_{i+1}.\]
Further, since $\tranf(n,a)=n$ for all $a\in A$; then we get $n\in\tranf(D_i',a_{i+1}\ldots a_\ell)=D_\ell'$, and hence $n\in D_1\subseteq \{1,\ldots,n-1\}$. This contradiction proves condition (\ref{eq:tran_low2}) and we obtain $\alpha=(12\ldots m)^s$ for some $s\ge 0$.

Let $r_k$ be the order of the previously defined permutation $\pi\in \Sym_k$. Hence $r_k$ is Landau's function \cite{landau}, i.e., the maximum order of an element of $\Sym_k$, and we get $r_k=\mathrm{e}^{\sqrt{k\ln k}(1+o(1))}$ as $k\to\infty$. Consider the map $f\colon q\mapsto \tranf(q,(12\ldots m)^{r_k-1})$. Since $f(D_1)=D_1$; then  for each word $\alpha\in A^*$ such that $f=T_{\alpha}$ we have $\alpha=(12\ldots m)^s$, and it is not hard to see that $f|_{D_1}=\pi^s$. Therefore we have $s\ge r_k-1$ and
$\ell_\cB(f)\ge \abs{\alpha}=ms \ge  \binom{n-1}{k}(r_k-1)$, where $\cB=\{T_a\mid a\in A\}$. Finally, if we let $k=\floor{n/2}$, then we obtain the inequality
\[\ell_\cB(f) \ge \binom{n-1}{k}(r_k-1)=2^n\mathrm{e}^{\sqrt{\frac{n}{2}\ln n}(1+o(1))}\text{ as }n\to\infty.\]
This completes the proof of the lemma.\qed
\end{proof}

Now we can prove the first of the two main results of this paper.
\begin{proof}[of Theorem \ref{th:tran}]
The result follows from Lemmas \ref{lm:tran_up} and \ref{lm:tran_low}.\qed
\end{proof}

Before we give the proof of Theorem \ref{th:diag}, we introduce the following definitions and notions.

A \emph{partition} of a set $S$ is a set $\pi=\{B_1,\ldots,B_m\}$ of pairwise disjoint non-empty subsets $B_i\subseteq S$ (called \emph{blocks}) such that $\cup_i B_i = S$. We say that a partition $\pi'$ is a \emph{refinement} of a partition $\pi$ and write $\pi' \le \pi$ if every element of $\pi'$ is a subset of some element of $\pi$. It is easily shown that the set of all partitions of $S$ is a partially ordered set with respect to the relation ``$\le$''. It has the least element (called \emph{discrete} partition), which contains $\abs{S}$ singleton  blocks, and the greatest element (called \emph{trivial} partition), which contains one $\abs{S}$-element block.

Given a finite automaton $\gA=(A,Q,B,\tranf,\outf)$ and a subset of states $S\subseteq Q$, the \emph{initial state uncertainty} (with respect to $\gA$ and $S$) after applying input word $\alpha$ is a partition $\pi_\alpha$ of $S$ such that two states $q, q'\in S$ are in the same block iff $\outf(q,\alpha)=\outf(q',\alpha)$. Informally speaking, the initial state uncertainty describes what we know about the initial state $q_0\in S$ of the automaton $\gA$ after applying the input word $\alpha$. From the definition of a PDS for $S$ in $\gA$ it follows that an~input word $\alpha$ is a PDS iff the partition $\pi_\alpha$ is discrete. Moreover, it is easy to prove that for every $\alpha,\beta\in A^*$ the partition $\pi_{\alpha\beta}$ is a refinement of $\pi_{\alpha}$.

\begin{proof}[of Theorem \ref{th:diag}]
Given an $n$-state automaton $\gA$ and a $k$-element subset $S$ of its states. Let $\alpha=a_1\dots a_\ell$ be a minimum length PDS for the subset~$S$. For each $i\in \{0,1,\dots,\ell\}$ we consider the two values ${k_i=\abs{\tranf(S,a_1\ldots a_i)}}$ and $r_i=\abs{\pi_{a_1\ldots a_i}}$. It is clear that $k_0 \ge k_1 \ge \dots \ge k_l$ and $r_0 \le r_1 \le \dots \le r_l$. Let $i_1,  \dots,  i_m$ be the increasing sequence of all indexes $i\in\{1,\dots,\ell\}$ such that $k_{i-1} > k_i$ or $r_{i-1} < r_i$. Since $\alpha$ is a minimum length PDS, then $r_{\ell-1} < r_\ell$ and $i_m=\ell$. Let also $i_0=0$. Hence the word $\alpha$ can be represented as $\alpha=\alpha_1 a_{i_1} \dots \alpha_m a_{i_m}$. Moreover, it is readily seen that $m\le 2(n-1)$.

Further, for each $j \in\{1,\dots,m\}$ there exists an input word $\alpha_j'$ such that ${T_{\alpha_j}|_{S_{j-1}} = T_{\alpha_j'}}|_{S_{j-1}}$ and $\abs{\alpha_j'} \le \ell(\Tran_n^{(p_j)})$, where $S_j = \tranf(S,a_1\ldots a_{i_j})$,  $p_j=k_{i_{j-1}}=k_{i_{j-1}+1}=\dots=k_{i_j-1}$. We claim that the word $\alpha'=\alpha_1' a_{i_1}\ldots \alpha_m' a_{i_m}$ is also a PDS for $S$. Indeed, in the converse case, there exist two states $q_1,q_2\in S$ such that $\outf(q_1,\alpha') = \outf(q_2,\alpha')$. Since $\alpha$ is a PDS, we obtain ${\outf(q_1,\alpha) \ne \outf(q_2,\alpha)}$. Let $j$ be the minimum index such that
\[\outf(q_1,\alpha_1 a_{i_1} \ldots \alpha_j a_{i_j})\ne \outf(q_2,\alpha_1 a_{i_1} \ldots \alpha_j a_{i_j}).\]
Then from $r_{i_j} > r_{i_j-1}=\dots=r_{i_{j-1}}$ and the minimality of the index $j$ it follows that
\[\outf(q_1,\alpha_1 a_{i_1} \ldots \alpha_j) = \outf(q_2,\alpha_1 a_{i_1} \ldots \alpha_j).\]
Therefore for the states $q_1'=\tranf(q_1,\alpha_1 a_{i_1} \ldots \alpha_j)$, $q_2'=\tranf(q_2,\alpha_1 a_{i_1} \ldots \alpha_j)$ we get $\outf(q_1',a_{i_j})\ne \outf(q_2',a_{i_j})$. At the same time since $T_{\alpha_1 a_{i_1} \ldots \alpha_j}|_S=T_{\alpha_1' a_{i_1} \ldots \alpha_j'}|_S$, we have $q_1'=\tranf(q_1,\alpha_1' a_1 \ldots \alpha_j')$, $q_2'=\tranf(q_2,\alpha_1' a_1 \ldots \alpha_j')$ and we finally obtain
\[\outf(q_1,\alpha_1' a_1 \ldots \alpha_j' a_{i_j}) \ne \outf(q_2,\alpha_1' a_1 \ldots \alpha_j' a_{i_j}).\]
Therefore the word $\alpha'$ distinguishes the states $q_1$, $q_2$ and hence is a PDS for $S$. Moreover, we have
$\abs{\alpha'}\le m + \abs{\alpha_1'} + \cdots + \abs{\alpha_m'}\le m + \ell(\Tran^{(p_1)}_n)+\dots+\ell(\Tran^{(p_m)}_n)$ and therefore
\[\ell(n,k) < m\max_{1\le p\le k}\left\{\ell(\Tran^{(p)}_n) + 1\right\}.\]

Since the function $\ell(\Sym_k)$ is increasing; then from $k\le n$, $m\le 2(n-1)$, asymptotic equality (\ref{eq:sym_compl}),  and inequality~(\ref{eq:tran_up2}) it follows that
\begin{align}
\ell(n,k) & < \binom{n}{k}\mathrm{e}^{\sqrt{n\ln n}(1+o(1))}\quad\text{if }k \le \frac{n}{2};\\
\ell(n,k) & < 2^n\mathrm{e}^{\sqrt{n\ln n}(1+o(1))}\quad\text{if }k > \frac{n}{2}.
\end{align}

To conclude the proof, it remains to use inequalities (\ref{th:sok:low1}) and (\ref{th:sok:low2}) with asymptotic equality (\ref{eq:entropy}).\qed
\end{proof}

\section{Remarks and Related Work}
Despite the fact that the length of a shortest PDS is exponential in the worst case in the class of all Mealy automata there are a number of natural automata classes where it is much smaller. For example, for the class of linear automata it is only logarithmic~\cite{Cohn:1964} and for the class of automata with finite memory it is linear~\cite{Rystsov:1973:en} in the number of states. Moreover, if in a reduced automaton $\gA$ for each input symbol $a$ and for each pair of different states $q,q'$ such that $\tranf(q, a) = \tranf(q', a)$ we always have $\outf(q,a)\ne \outf(q',a)$ then every preset homing sequence (PHS) for $\gA$ is also a PDS for $\gA$~\cite{Rystsov:1973:en}. Hence using the classical result of Hibbard~\cite{Hibbard:1961} for PHSs it immediately follows that for any such $n$-state automaton a PDS always exists, can be efficiently computed, and the length of a shortest PDS is upper bounded by $\frac{n(n-1)}{2}$. Moreover, this upper bound is tight~\cite{Karacuba:1960:en,Hibbard:1961}. The class of such automata was investigated by the author in~\cite{Panteleev:2007:en} under the name multiply reduced automata. It is interesting to note that exactly the same class was considered in a recent paper~\cite{Gunicen:2014} under the name DMFSM where an $O(n^3)$ upper bound on the PDS length was obtained and an $O(n^2)$ upper bound was only conjectured.

\bibliographystyle{splncs03} 
\bibliography{abbrevs,automata}
 
\end{document}